\newif\ifarxiv
\newtheorem{theorem}{Theorem}
\newtheorem{example}[theorem]{Example}
\title{Small tile sets that compute while solving mazes\thanks{To appear at the 27th International Conference on DNA Computing and Molecular Programming (DNA27)}}
\newcommand{\specificthanks}[1]{\@fnsymbol{#1}}
\date{}
\author{Matthew Cook\thanks{Institute of Neuroinformatics, University of Zurich and ETH Zurich} 
\hspace{2em} Tristan St\'{e}rin\textsuperscript{\specificthanks{3}} \hspace{2em} Damien Woods\thanks{Hamilton Institute and Department of Computer Science, Maynooth University, Ireland. Research supported by European Research Council (ERC) under the European Union's Horizon 2020 research and innovation programme (grant agreement No 772766, Active-DNA project), and Science 
Foundation Ireland (SFI) under Grant number 18/ERCS/5746.} 
}
\title{Small tile sets that compute while solving mazes} 
\titlerunning{} 
\author{Matthew Cook}{Institute of Neuroinformatics, University of Zurich and ETH Zurich}{}{}{}
\author{Tristan Stérin
}{Hamilton Institute, Department of Computer Science, Maynooth University}{tristan.sterin@mu.ie}{https://orcid.org/0000-0002-2649-3718}{Research supported by European Research Council (ERC) under the European Union's Horizon 2020 
research and innovation programme (grant agreement No 772766, Active-DNA project), and Science 
Foundation Ireland (SFI) under Grant number 18/ERCS/5746.}
\author{Damien Woods
}{Hamilton Institute, Department of Computer Science, Maynooth University}{tristan.sterin@mu.ie}{https://orcid.org/0000-0002-0638-2690}{Research supported by European Research Council (ERC) under the European Union's Horizon 2020 
research and innovation programme (grant agreement No 772766, Active-DNA project), and Science 
Foundation Ireland (SFI) under Grant number 18/ERCS/5746.}
\authorrunning{M. Cook, T. Stérin, D. Woods} 
\keywords{model of computation, self-assembly, small universal tile set, Boolean circuits, maze-solving} 
\newcommand{\Nset}{\ensuremath{\mathbb{N}}}
\newcommand{\Rset}{\ensuremath{\mathbb{R}}}
\newcommand{\Zset}{\ensuremath{\mathbb{Z}}}
\newcommand{\calT}{\mathcal{T}}
\newcommand{\simnot}{\mathord{\sim}}
\newcommand{\mawatamfull}{Maze-Walking Tile Assembly Model\xspace} 
\newcommand{\mawatam}{Maze-Walking TAM\xspace} 
\newcommand{\datam}{\mawatam}
\newcommand{\computeGateFourTilesTileLevel}{\ensuremath{6}}
\newcommand{\crossoverGateFourTilesTileLevel}{\ensuremath{34}}
\newcommand{\computeGateSixTilesTileLevel}{\ensuremath{14}}
\newcommand{\crossoverGateSixTilesTileLevel}{\ensuremath{33}}
\newcommand{\nn}{NAND-NXOR\xspace}
\begin{document}

\maketitle

\begin{abstract}
We ask the question of how small a self-assembling set of tiles can be yet have interesting computational behaviour.  We study this question in a model where supporting walls are provided as an input structure for tiles to grow along: we call it the Maze-Walking Tile Assembly Model.  The model has a number of implementation prospects, one being DNA strands that attach to a DNA origami substrate.  Intuitively, the model suggests a separation of signal routing and computation: the input structure (maze) supplies a routing diagram, and the programmer's tile set provides the computational ability.  We ask how simple the computational part can be.

We give two tiny tile sets that are computationally universal in the Maze-Walking Tile Assembly Model.  The first has four tiles and simulates Boolean circuits  by directly implementing NAND, NXOR and NOT gates.  Our second tile set has 6 tiles and is called the Collatz tile set as it produces patterns found in binary/ternary representations of iterations of the Collatz function.  Using computer search we find that the Collatz tile set is expressive enough to encode Boolean circuits using blocks of these patterns.  These two tile sets give two different methods to find simple universal tile sets, and provide motivation for using pre-assembled maze structures as circuit wiring diagrams in molecular self-assembly based computing. 
\end{abstract}

\section{Introduction}
We can think of  solving a maze as performing computation:
the input is a maze, some starting location(s) and an ending location, and the answer to the computation is a yes/no answer signifying whether the exit is reachable from the start, or even an explicit path from start to exit.
Figure~\ref{fig:mazes as circuits}(a,b) shows how a maze encodes a circuit of OR gates:
solving the maze is equivalent to executing the OR circuit with all inputs set to bit 1; and asking about paths in the maze is equivalent to setting some inputs to 1 and seeing which paths have 1 flowing all the way through them. 
It then becomes meaningful to ask about the computational power of systems capable of   solving mazes~\cite{allender2009planar,murphy2012}, for example molecular walker-based systems. 

The difficulty of maze-solving varies with the complexity of the maze, such as 
number of dimensions, 
grid layout versus more general graph, 
degree of nodes, or 
whether graph edges are directed or undirected. 
In computational complexity theory terminology, solving mazes and more general graph reachability problems lie within the class NL~\cite{allender2009planar,moore2011nature,murphy2012}, i.e.~problems solvable on a nondeterministic Turning machine that uses temporary workspace only logarithmic in input length. 
At the simplest level, and perhaps counter-intuitively, a system that solves a directed maze consisting of (a number of possibly disconnected) straight line segments has enough computational power to solve any problem in L, the deterministic version of NL~\cite{Imm1998}.\footnote{The {\sc PathReachability} problem is L-complete:  given a directed graph whose edges form a set of disconnected line segments (in- and out-degree $\leq 1)$, and two nodes $s$ and $t$, is $t$ reachable from $s$? A deterministic Turing machine can start at $s$ and walk along the graph use only logarithmic workspace (in input length) to keep track of the current node, answering ``yes'' if it reaches $t$ and ``no'', if it instead reaches a dead-end. Hence the problem is in L. Conversely, the set of configurations of a deterministic logspace Turing Machine can be encoded as a polynomial-sized instance of {\sc PathReachability} making that problem L-complete~\cite{Imm1998}.}
Thus maze-solving lies between L and NL, depending on the complexity of the setup.

\begin{figure}[t]\centering
    \includegraphics[width=1.0\textwidth]{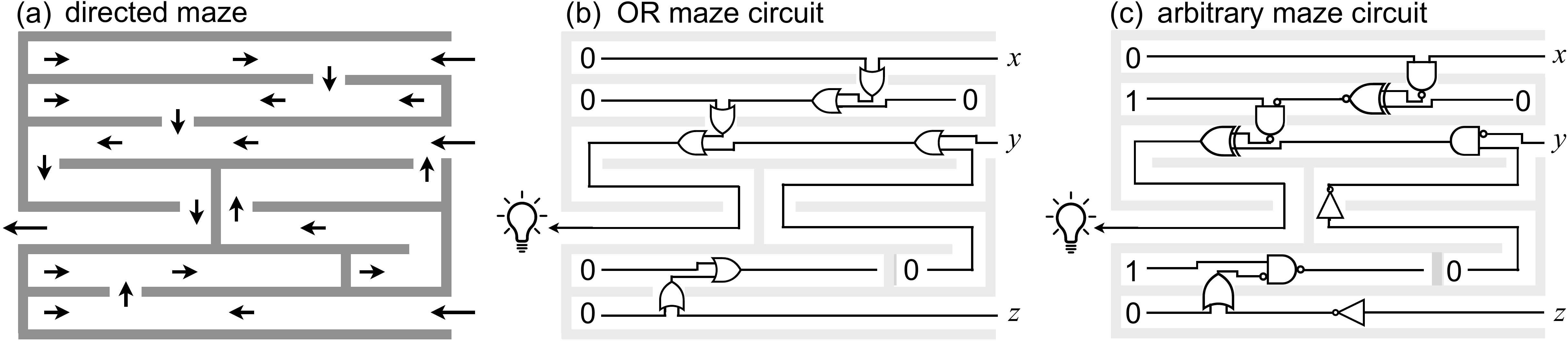}
    \caption{Mazes, computation and Boolean circuits. Solving (a) a {\em directed} maze, where paths have directions, is formally equivalent to executing (b) an OR circuit if we ask: are any of the input bits that are set to 1 connected to the output gate? 
    The example in (b) accepts any 3-bit  input $x,y,z$ that sets $x$ or $y$ to 1, irrespective of~$z$; equivalently the maze is solvable from the top two inputs only.
    Even such a simple setup, allowing for arbitrary mazes, can compute by  solving any suitably-encoded problem in the class nondeterministic logspace~(NL)~\cite{allender2009planar,murphy2012}. 
(c)~We generalise this notion of `computation via maze-solving' in a natural way by having the maze specify arbitrary Boolean gates along the route that need to be evaluated. In the \mawatamfull defined in Section~\ref{sec:def:mawatam}, tiles flow through the maze, building paths from the entrances to the exit, evaluating the circuit as they go.}
    \label{fig:mazes as circuits}
\end{figure}

Here, we suggest two modifications to the maze-solving problem, which are expressive enough to endow maze solvers with significant computational power (their prediction problem becomes P-complete),
yet, we contend, simple enough to be experimentally feasible using DNA engineering and computing principles. 
The first, and most important, is that we generalise mazes to have paths patterned with logic gates that must be solved in order to pass by them (Figure~\ref{fig:mazes as circuits}(c)). For a maze-walker this would mean it should be able to input one or two bits of information from the site it stands upon, compute, and then output one or two bits to adjacent sites. 
The second, mainly to keep things simple, is that we assume mazes are directed (meaning a pair of adjacent positions have one directed edge between them that dictates the direction of information flow) and have no cycles. 
Since we allow for fanout of 0, 1 or 2 per site, one needs to generalise the typical notion of maze-solving somewhat: 
Are walkers replicating themselves to handle fanout of~2? Or are they leaving little bit-encoding messages for other walkers/themselves to pick up later? How do they handle fanin of~2?
These considerations lend themselves to various models, however here we focus on  having information-manipulating {\em tiles} flow through the maze, much like lava flowing down a complex volcanic hillside, but clever lava that computes as it moves.  Our model is called  the \mawatamfull, or \mawatam.

The programmer specifies a set of square tiles, with glues on the sides. 
A problem instance, or maze,  is a set of polyominos, painted with information-encoding glues. 
Starting at special input locations, tiles attach one at a time, asynchronously and in parallel,  wherever they match glues on two sides.\footnote{The model is equivalent to the abstract Tile Assembly Model~\cite{rothemund2000program,Winf98,PatitzSurveyJournal,DotCACM}, with multiple disconnected seed assemblies, and where we have all tile bindings are by attachment to an assembly by two matching glues.}
A typical maze can be thought of as sending a unary (``route finding'') signal, whereas our mazes send bits and allow them to meet,  interact and be changed.

In this setting, if we allow arbitrary numbers of tiles (or a clever enough walker, or a complex enough asynchronous cellular automaton rule)  it is not difficult to see how to simulate arbitrary Boolean circuits. 
Take a circuit, make it planar by replacing each wire crossing with a crossover gate, then lay the circuit out on a maze-like grid with input gates on the east, and the output gate on the west.
Then simply build a maze with walls tracing out the circuit wiring diagram and painted with arrows (wire directions) and logic gates, and require the output bit(s) to satisfy the circuit logic. 
The question we ask is:  How clever does the maze-solver need to be in this computational setting? More precisely, we ask how many tile types are needed to execute any Boolean circuit in the \mawatam?

\subsection{Main results}

\begin{figure}
    \centering
    \includegraphics[width=\textwidth]{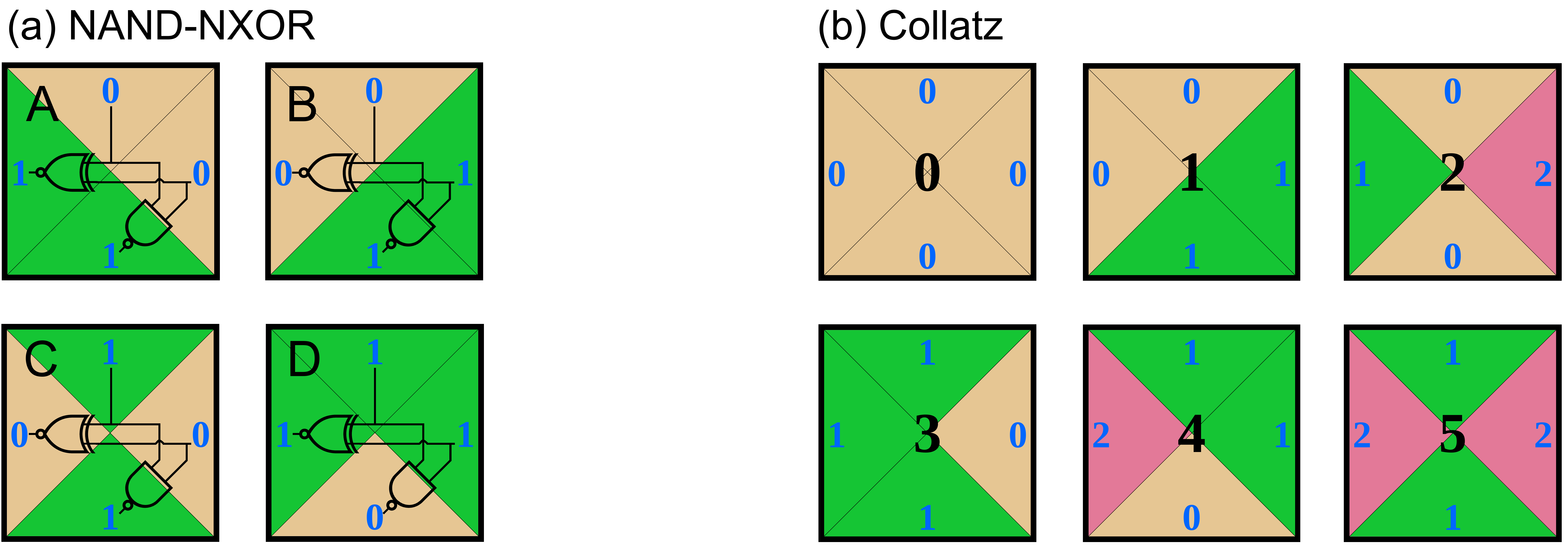} 
    \caption{Two small tiles sets.
        (a) \nn tile set with 4 tile types. 
        The south side computes the NAND of north and east,
        and west computes the NXOR of north and east.
        (b) Collatz tile set with 6 tiles, named for its relationship to the Collatz problem.
    }\label{fig:tilesets}
\end{figure}

Our first main result is for the \nn tile set shown in Figure~\ref{fig:tilesets}(a). In 
the theorem statement, by {\em simulated} we mean that the function computed by the circuit $c$ is also computed by an instance of the \mawatam (see Section~\ref{sec:def:mawatam}).

\begin{restatable}[]{theorem}{thmNANDNXOR} 
  \label{thm:NANDNXOR}
  \normalfont
  Any Boolean circuit $c$ is simulated by the 4-tile \nn tile set in the \mawatam using assemblies
  containing $\leq \computeGateFourTilesTileLevel$ tiles per gate and 
  $\crossoverGateFourTilesTileLevel$ tiles per crossover gate in a planarisation of $c$.
\end{restatable}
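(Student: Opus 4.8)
The plan is to reduce the theorem to two well-known facts and then do gadget engineering inside the \mawatam. First, $\{\textnormal{NAND}\}$ is a universal basis — in particular $\textnormal{NOT}(x)=\textnormal{NAND}(x,x)$ — so every Boolean circuit $c$ has an equivalent circuit over NAND, NOT and fanout. Second, any circuit is planarised by replacing each wire crossing with a crossover (wire-swap) gate, and a crossover is built from three XOR gates using $a\oplus(a\oplus b)=b$, hence from NXOR gates (since $\textnormal{XOR}=\textnormal{NOT}\circ\textnormal{NXOR}$), NOT gates, and fanout. Combining these, it suffices to realise in the \mawatam, using only the four tiles of Figure~\ref{fig:tilesets}(a): constant $0$/$1$ sources (supplied as painted segments of the maze, exactly as inputs are), a routable wire segment, a fanout, and NAND, NXOR and NOT gadgets; the crossover gadget is then the explicit composition of the NXOR, NOT and fanout gadgets.

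The mechanic that makes the gadgets tiny is that the tile $T_{n,e}$ with north glue $n$ and east glue $e$ presents $\textnormal{NAND}(n,e)$ on its south side and $\textnormal{NXOR}(n,e)$ on its west side, so every attaching tile reads its north/east neighbours and writes its south/west neighbours and signals flow monotonically to the south-west. Feeding a constant $1$ from a wall east of a cell whose north neighbour carries bit $b$ forces $T_{b,1}$, which outputs $b$ to the west and $\overline b$ to the south: a single tile that simultaneously turns a wire, fans it out, and inverts one copy; symmetric statements hold with north/east and with the constant $0$. From these primitives a NOT gate is one tile, a fanout is two tiles, a NAND (resp.\ NXOR) gadget is one tile plus the turns needed to steer its two operands so that one arrives from the north and one from the east, and the crossover is the composition described above. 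I would lay the planarised circuit out as a south-westward ``staircase'' of such gadgets — inputs entering as constant-painted walls on the east, the answer read off a designated cell on the west, crossover gadgets permuting the signal lines as needed — and simply paint the maze walls to trace this wiring diagram. Tallying the tiles in each gadget gives $\leq \computeGateFourTilesTileLevel$ per gate and exactly $\crossoverGateFourTilesTileLevel$ per crossover.

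For correctness, I would observe that every corridor cell of the layout is driven from exactly two sides (two tile neighbours, or one tile neighbour and one painted wall) and that the layout is acyclic and directed to the south-west, so the resulting tile assembly system is locally deterministic and has a unique terminal assembly; a straightforward induction along the topological order of the gadgets then shows that the glue appearing on each wire equals the value carried by the corresponding wire of $c$, and in particular the designated output cell carries $c$'s output.

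I expect the high-level argument to be routine and the gadget engineering to be the real work. The main obstacle is making fanin, fanout, the turns, and especially the crossover function using only south-westward flow — every would-be ``upward'' or ``rightward'' routing step must be re-expressed, at the cost of extra inverting turns, which forces careful bookkeeping of which copy of each signal is negated where. One must also check that no spurious tile can attach at gadget boundaries (so that local determinism genuinely holds) and verify the exact tile counts $\computeGateFourTilesTileLevel$ and $\crossoverGateFourTilesTileLevel$, including the interface cells shared between adjacent gadgets.
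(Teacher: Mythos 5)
Your proposal is correct in outline and takes essentially the paper's route: the same single-tile NAND/NXOR mechanics with constant-glue seed walls acting as wires, south-westward directed growth giving local determinism (tiles only ever attach by their north and east sides, on which the tile set is deterministic), planarisation via crossover gates built from three XORs and fanouts ($\crossoverGateFourTilesTileLevel$ tiles), and a grid layout of gadgets traced by the maze. The one deviation is that the paper does not convert $c$ to a NAND/NOT basis but directly exhibits gadgets for all sixteen two-input gates (the largest, NOR, using $\computeGateFourTilesTileLevel$ tiles), together with parity-correcting NOT turns for odd-length vertical wires and isolated blocker squares at fanouts to prevent spurious attachments; this direct gate library is what secures the $\leq\computeGateFourTilesTileLevel$-tiles-per-gate bound for arbitrary gates of $c$, a point your NAND-composition tally (with its extra routing turns per original gate) would still need to verify explicitly.
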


Our second main result is for the Collatz tile set which has 6 tiles (Figure~\ref{fig:tilesets}(b)) and is so-named because of its ability to embed iterations of the Collatz function (see Appendix~\ref{app:iwantmoreCollatz}). 

\begin{restatable}[]{theorem}{thmCollatz} 
  \label{thm:Collatz}\label{thm:six tiles}
  \normalfont
    Any Boolean circuit $c$ is simulated by the 6-tile Collatz tile set in the \mawatam using assemblies
  containing $\leq \computeGateSixTilesTileLevel$ tiles per gate and 
  $\crossoverGateSixTilesTileLevel$ tiles per crossover gate in a planarisation of $c$.
\end{restatable}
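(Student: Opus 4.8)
The plan is to reduce the simulation of an arbitrary Boolean circuit to assembling a fixed, finite toolkit of \emph{gadget blocks}: each block is a bounded region of the grid tiled only by the 6 Collatz tiles of Figure~\ref{fig:tilesets}(b), carrying on its boundary a canonical encoding of one bit per wire. It suffices to exhibit Collatz-tile gadgets for (i) a functionally complete collection of logic gates together with a fanout (bit-duplication) gadget; (ii) wire segments and turns that carry a bit across a block while re-establishing the canonical boundary encoding; and (iii) a crossover gadget routing two bits past one another --- and, unlike the \nn construction where a crossover is synthesised from several logic gates, here we aim for a crossover that is itself a single native block. Given such a toolkit, an arbitrary circuit $c$ is first planarised by replacing wire crossings with crossover gates (as in the informal construction preceding Theorem~\ref{thm:NANDNXOR}), laid out on a grid, and turned into a \mawatam instance in the manner sketched in the introduction: maze walls tracing the wiring diagram of the planarisation, now painted with wire directions and with specifications selecting which Collatz-tile block occupies each region. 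The stated bounds of $\computeGateSixTilesTileLevel$ tiles per gate and $\crossoverGateSixTilesTileLevel$ tiles per crossover are then just the sizes of the corresponding blocks.

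First I would fix the interface convention: information flows generally westward and southward (input gates on the east, output on the west, as in Section~\ref{sec:def:mawatam}), so each block reads its input bit(s) off glues on its north and east sides and writes its output bit(s) to glues on its south and west sides, all block-to-block boundaries drawn from the same single-bit glue alphabet. The central lemma to establish is a \emph{forcing-and-correctness} statement for every gadget: once the input glues and the surrounding maze-wall positions are fixed, the interior of the block admits a unique completion by the 6 Collatz tiles, and the resulting output glues equal the intended Boolean function of the inputs. Because the Collatz tiles generate intricate, essentially aperiodic patterns --- the very feature linking them to the Collatz iteration --- I do not expect a short hand proof of this lemma; rather, it is verified by exhaustive computer search over block shapes and wall patterns, checking for each candidate that every admissible input assignment forces a unique tiling and tabulating the induced input/output relation. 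The search must certify that the blocks it returns realise a functionally complete gate set, a clean fanout, working wires and turns, and a native crossover, and the numbers $\computeGateSixTilesTileLevel$ and $\crossoverGateSixTilesTileLevel$ are the smallest blocks it finds for these roles.

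Second I would prove that the gadgets compose into whole circuits. The key point is \emph{local determinism at block interfaces}: since adjacent blocks meet along edges carrying only the canonical single-bit encoding, the output interface of one block is a legal input interface of the next, and the maze walls forbid any tile from attaching outside its block or inconsistently with a neighbour, so the assembly grows with no spurious tiles, is confluent, and terminates. Wires and turns are included precisely so that an arbitrary planar wiring diagram can be routed on the grid while keeping every interface canonical. Combining the per-gadget forcing-and-correctness lemma with this composition argument shows that the terminal assembly's designated output glue equals $c$'s output on every input, which is exactly the notion of \emph{simulated} from Section~\ref{sec:def:mawatam}; summing gadget sizes over the gates and crossovers of the planarisation then gives the claimed tile counts.

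The main obstacle is precisely the per-gadget forcing-and-correctness lemma together with clean composability. The Collatz tile set was not designed to compute, so it is a priori unclear that \emph{any} block re-synchronises to a reusable bit-encoding, let alone that an entire universal toolkit --- and in particular a crossover --- fits inside blocks as small as $\computeGateSixTilesTileLevel$ and $\crossoverGateSixTilesTileLevel$ tiles. The heart of the argument is thus the computer-assisted exhaustive verification that the chosen blocks are deterministic under their wall constraints and implement a functionally complete gate set, fanout, wires and a crossover; once this catalogue is in hand, the planarise-and-lay-out reduction and the tile counting are routine.
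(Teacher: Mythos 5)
Your proposal follows essentially the same route as the paper's proof: computer-search-derived gadget blocks (wires, turns, fanout, a functionally complete gate set, and a single native crossover seed) with bits encoded on block-boundary glues, a planarise-and-grid-layout reduction reused from the \nn construction, a per-gadget directedness claim combined with a composition argument to get a unique terminal assembly, and tile counts read off the largest gate (NOR, \computeGateSixTilesTileLevel{} tiles) and the crossover (\crossoverGateSixTilesTileLevel{} tiles). The only differences are presentational: the paper handles the horizontal-wire parity constraint explicitly (via paired turn/fanout variants and a buffer gadget) and justifies directedness with a short argument that the tile set is deterministic on its North--East and South--East sides and that these attachment modes cannot compete, rather than leaving forcing entirely to the search certification.
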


We finish this section with a discussion of our two tile sets and some future directions. 
Section~\ref{sec:related} sets these results in the context of other theoretical results 
 and experimental directions. 
 Section~\ref{sec:defs} defines the \mawatam.  We prove our two main theorems in Sections~\ref{sec:four tiles} and~\ref{sec:six tiles}.
 Appendix~\ref{app:iwantmoreCollatz} gives some background on the Collatz tile set.

\subsection{Discussion:  the \nn and Collatz tile sets}
Theorems \ref{thm:NANDNXOR} and \ref{thm:Collatz} place focus on the size of assemblies that simulate gates. 
They omit estimates of the additional tiles (assemblies) required for the circuit wiring diagram, which warrants comment. 
Our work is partially motivated by a desire to build instances of the \mawatam, and in doing so we would  highly optimise any implemented circuit wiring diagram. 
Example circuit implementations,  that recognise 3-bit prime numbers, are shown in Figures~\ref{fig:gadgets4}(j3) and \ref{fig:gadgets6}(j1), both of which are optimised for short wire length. 
If we want to have a general wiring procedure for all circuits, and thus not optimised for particular classes of circuits,  the overhead incurred will be rather large, typically $O(s^2)$ space for a circuit with $s$ gates~\cite{graphOnAGrid}. 
In practice we would not use such overly-bloated constructions.

The \nn tile set was found by explicitly trying to find a small tile set: hence its use of a universal gate (NAND) on the south side (output). The NXOR gate (west side) helps with wire routing allows for even smaller gates than going via NAND-only-based circuit simulation. 
The Collatz tile set came out of thinking about iterations of the Collatz function in a local digit-by-digit, or tile-by-tile, way. 
In~\cite{Collatz2} a cellular automaton-like model is shown to simulate instances of the Collatz function---assemblies of our Collatz tile set show up in iterations (configurations) of that model. 
The Collatz tile set, along with the non-local rule in~\cite{Collatz2} (which can be simulated by the addition of two additional tile types, see Appendix~\ref{app:iwantmoreCollatz}), is expressive enough to run Collatz. 
Here we applied computer search to the Collatz tile set to search for seed structures and assemblies that could be used to compute more generally. 
We leave as an open question as to what extent such structures, or other computational structures, naturally appear during iterations of the Collatz function---something the Collatz tile set might help us see. 

For running Boolean circuits, if the only metric we cared about was tile set size, the \nn tile set wins. 
However, looking beyond circuits, 
the Collatz tile set is capable of directly implementing certain arithmetical operations, such as computing powers of 2, powers of 3, and converting from base 3 to base 2~\cite{Collatz2} (see Appendix~\ref{app:iwantmoreCollatz}). 
These constructions use much simpler connected seeds than those given in the proof of Theorem~\ref{thm:Collatz}, and  lead to more efficient (smaller) assemblies than computing via tiles-simulating-circuits, for these kinds of arithmetical problems.   
In this paper, we used computer search to find that tiles capable of such arithmetical operations are also capable of running circuits, we leave it as future work to discover what other operations they are efficiently capable of. 

Theorems 1 and 2 prove that the problem of predicting a tile at distance $n$ from a size $n$ connected seed, is P-hard (and in fact it is also P-complete if we assume directed/deterministic growth~\cite{SolWin07} since a deterministic Turing Machine  simulates the entire assembly process in time polynomial in $n$). 
It is natural to ask if having maze-like (i.e.~disconnected) seeds is necessary for such computational efficiency: we conjecture ``yes''. That is,     
for both tile sets, we conjecture that prediction of the tile type that goes at a given position, at distance $n$ from a size $n$ connected seed and assuming directed growth, is in the complexity class NL.  
In particular this would mean that simulation of arbitrary Boolean circuits in the direct manner shown here is impossible, assuming the widely-believed conjecture NL $\neq$ P. 
For the Collatz tile set, and for connected seeds of a certain form, we know that prediction is in NL (Appendix~\ref{app:iwantmoreCollatz}). 
If one could show that prediction is P-hard, for seeds/inputs that represent natural numbers that occur during iterations of the Collatz function, one could in fact show that the Collatz process embeds rather powerful computational capabilities. 
Certainly a result of that form would change the perspective on the Collatz conjecture itself. 

Our results were developed with assistance of a simulator: \url{https://github.com/tcosmo/mawatam}. The reader is invited to experience the results of this paper through the simulator.

\subsection{Future work}
Experimentally, future work involves implementing instances of the \mawatam in the wet-lab, for instance, using a DNA origami as the underlying structure to encode maze seeds \cite{Chao2019}, building on the systems discussed in Section~\ref{sec:related experiment}.
One experimentally-relevant criticism of this work could be to ask why we focus on such small tile sets when we know that with DNA it is possible to build systems with hundreds of algorithmic DNA tiles~\cite{IBCs}.  
First, we would say that no algorithmic system of such a high tile complexity, and that runs on the back of a DNA origami, has been engineered to date. 
Secondly, and of more relevance to this work, is that we are exploring the fundamental boundary and complexity trade-offs between computational power and systems size.

Theoretically, our work leaves open the following questions:
\begin{itemize}
\item Can Boolean circuit simulation, or any kind of universal computation, be achieved in the \mawatam using tile sets with  less than $4$ tiles?
\item Can interesting behaviour occur in the \mawatam with just 1 tile? (At first sight, this question may look odd, however one could imagine encoding a bit  by the absence or presence of a tile at a given position in the final assembly, leaving room for expressiveness in the \mawatam with 1 tile.)
\item Is the \mawatam, with $\leq 4$ tiles, intrinsically universal~\cite{IUSA,IUsurvey} for the aTAM?
\end{itemize}

\ifarxiv
\newpage
\fi
\section{Related work: theoretical and experimental}\label{sec:related}

\subsection{Other routes to finding small universal tile sets}\label{sec:related theory}
Existing small/simple universal models of computation~\cite{WoodsNearySurvey} include 
the efficiently universal~\cite{Cook,neary2006p} 2-state one-dimensional cellular automaton Rule 110,
as well as universal Turing machines with just 22 instructions (5 states \& 5 symbols, or 4 states \& 6 symbols)~\cite{neary2009four, Rogozhin96}  or even just with 8 instructions (3 states, 3 symbols, but with the tape input embedded in an infinitely repeated pattern)~\cite{NearyWoodsWeakly}. 

In the context of the theory of molecular computing, and algorithmic self-assembly in particular, the smallest computationally universal self-assembling tile set to date seems to be a~7-tile system that can be derived from~\cite{IBCs}.\footnote{In Figure~S4(b), SI~A,~\cite{IBCs}, gates $g$ and $f$ can be used to simulate Rule 110, and that in turn can be simulated by 4 tiles each. These 8 tiles can be further optimised to 7 tiles by sharing one glue type between both half-layers.}  
However, that construction leads to large spatial blowup via Rule 110 simulation of  $O(s^4 \log^2 s)$ for circuits of size $s$  (Corollary~S1.3, SI-A~\cite{IBCs}). 
Another construction uses $O(w^2 d)$ tile types (for a depth $d$, width $w$ circuit), essentially by hardcoding the routing of the circuit diagram in tile types (Theorem S1.5, SI-A \cite{IBCs}). 
Even direct implementation of a small universal Turing machine as a self-assembling tile set, using known methods, although presumably achievable with a few dozen tile types, would require large input encodings~\cite{WoodsNearySurvey}.
Other methods to obtain a single universal, or intrinsically universal, tile set, or even a single tile, also use indirect and large, albeit constant-factor in some cases, encoding methods~\cite{IUSA,2HAMIU,OneTile,SoloveichikWinfree}. 

By allowing for more tile types than our constructions, one could have a maze with glues that explicitly encode gate type (one of sixteen), as well as glues encoding two bits at a time: that way a single tile attachment event could read two bits and a gate type simultaneously. This idea yields a constant-size tile set with perhaps a few dozen tile types. 
Although larger than ours, such an approach would have experimental merit. 
Cantu, Luchsinger,  Schweller, and Wylie simulate Boolean circuits with tiles in a covert manner~\cite{cantu2021covert}.

\subsection{DNA-based implementations and related models}\label{sec:related experiment}
 As future work we plan to give DNA-based designs and  implementation for the \mawatam. 
We imagine a 2D information-encoding structure that provides the maze pattern, for example a single flat DNA origami~\cite{rothemund2006folding}, or several DNA origamis tiled together~\cite{woo2011programmable,liu2011crystalline,tikhomirov2017fractal,tikhomirov2017programmable}, or perhaps even a suitable DNA DX-tile, or single-stranded tile,  structure~\cite{SST2D,erikNed1998,Yan8103,IBCs}.
DNA-based systems for maze-solving have been implemented experimentally:  
using DNA origami (for the maze) along with hairpin activation \cite{Chao2019} or controlled opening of track locations~\cite{Wickham2012} for movement.
The phenomenon of DNA condensation was also used for maze exploration~\cite{pardatscher2016dna}. 
Computation via tile-attachment in the \mawatam could be implemented using design principles from algorithmic DNA self-assembly~\cite{IBCs,evans2014crystals},
DNA-based molecular walkers that walk on 1D tracks and 2D DNA origami surfaces~\cite{walker_yin2004, walker_sherman2004, walker_Sahu2008, walker_omabegho2009, gu2010proximity, walker_cargoSorting2017},
and other DNA systems that compute on  surfaces~\cite{localised_dna_hairpin_chain_Bui, localised_dna_origami_circuit,
    dna_based_circuit_cancer_membrane_Song2019, pmid28977499, Chatterjee2017}.  
Finally, there has been some theoretical and simulation-based analyses of molecular walkers~\cite{Dannenberg, walker_computing_REIF20091428, lakin2014abstract,dalchau2015probabilistic} including maze-solving walkers~\cite{darkoMazeWalker},
as well as papers that study computation on surfaces~\cite{surface_qian2014, surface_clamons2020, surface_swap_Brailovskaya_2019} using a similar setup to ours but without molecular orientation and using different rule formats. 
All of these models (and ours) describe sub-classes of asynchronous cellular automata.

\section{Definitions}\label{sec:defs}
\subsection{\mawatam definition}\label{sec:def:mawatam}
A {\em maze} is collection of non-intersecting polyominos positioned on $\Zset^2$  where each exterior unit-length square-side polyomino edge  is labeled with a glue $g = (g',p)$ where $g' \in G$ is from a finite set of {\em glue types} $G$,  that includes the null glue, and  $p \in \{ z + 0.5 \mid z\in\Zset  \}^2$ is a glue position.
An {\em instance} of the \mawatam $\calT = (T, M)$ has a set of tile types $T$, where each $t \in T$ is a unit-sized square whose four sides  labelled with four glue types from $G$, and a maze $M$.
The process of {\em self-assembly} proceeds by tiles (instances of tile types) attaching asynchronously, and one at a time, wherever they match non-null glues on two sides (i.e.~two-sided cooperative binding in the abstract Tile Assembly Model~\cite{Winf98,rothemund2000program,PatitzSurveyJournal,DotCACM}).
An {\em assembly} is a maze with tiles attached (thus, assemblies may be connected or disconnected in 2D), and a {\em terminal assembly} is an assembly such that no tile can be attached. 

The tile set $T$ is said to {\em compute the function} $f : \{0,1\}^n \rightarrow \{0,1\}$ in the \mawatam if there is a maze $M'$ with $n$ empty (no tile) tile positions $p_0,p_1, \ldots, p_{n-1} \in \Zset^2$  
and an empty (no glue) output glue position~$o \in \{ z + 0.5 \mid z\in\Zset  \}^2$,
such that adding $n$ {\em input tiles} at $p_0,p_1, \ldots, p_{n-1}$ to $M'$ is the new maze called $M_x$ where the process of self-assembly on $M_x$ yields a set of terminal assemblies that each have the bit $f(x)$ encoded by the glue at position $o$. 
(Here, we imagine a many-one encoding function from glue types to bits.) 
 
\mawatam systems may be {\em directed} (one terminal assembly), or {\em undirected} (several terminal assemblies). 
In this paper the systems we study are directed, which is equivalent to saying that, for all sequences of tile additions, at each position $p\in\Zset^2$, there is at most one choice for what tile appears at $p$. 
Thus, in this paper, for a function $f$, for each $x\in\{0,1\}^n$ there is an associated maze $M_x$ such that  $\calT_x = (T,M_x)$ has  a single terminal assembly that is said to compute $f(x)$. Finally, a Boolean circuit $c$ (defined below) is said to be simulated by a tile set if the tile set computes the same function as $c$.

\subsection{Boolean circuit definition}\label{sec:defs:circuits} A Boolean circuit is a
directed acyclic graph, where edges are called {\em wires}, and nodes are called {\em gates}
and are labelled. In this paper, gates have out-degree 1 or 2, except for output
gates that have out-degree 0. Also, a node's label is one of: input (with
in-degree~0), output (with in-degree 1, out-degree~0), constant~0 or constant~1
(in-degree~0), fanout gates (in-degree~1, out-degree~2; makes two copies of its
input), or is one of the {\em compute} gates ($\neg$, NOT of in- and out-degree
$1$, or any of the in-degree 2 out-degree 1 gates that compute functions on
bits, e.g. OR,  AND, NAND, NXOR,\footnote{In this paper we use the notation
NXOR($x,y$) = NOT(XOR($x,y$)) (and read ``NOT exclusive OR'') to denote what is
more commonly, but confusingly, written XNOR (read ``exclusive NOR'').} etc.).
Also, we define an additional gate called a {\em crossover gate}
(in- and out-degree of 2) which swaps its inputs, used to planarise a non-planar
circuit (see below). Circuits compute, from the  input gates and constant gates to the
output gate, by modifying bits according to the functions specified by gate labels.

The \emph{size} of a circuit is its number of gates, and its \emph{depth} is the length of the longest path from any input gate to the output gate.
A circuit $c$ computes a Boolean (no/yes) function $f :\{ 0,1\}^n \rightarrow {0,1} $ on  $n\in\Nset$ Boolean variables, by its gates computing the bit value at the output in the usual way from the $n$ input bits.
A circuit is said to be planar if its graph is planar (can be laid out in the plane without wire crossings).

A {\em planarisation} of a Boolean circuit $c$ is another Boolean circuit $\hat c$ where $\hat c$ computes the same function as $c$, has a planar embedding in $\Rset^2$, and
$\hat c$ has exactly the gates of $c$ plus zero or more 2-in 2-out crossover gates (that allow crossing of signals between a pair of wires that would otherwise intersect in the plane). 
In other words, $c$ is converted to $\hat c$ by adding crossover gates so that $\hat c$ has a planar embedding.
An example is shown in Figure~\ref{fig:gadgets4}(j2). 
A {\em planar Boolean circuit} $c$ is a Boolean circuit where $\hat c = c$, i.e. $\hat c$ has zero crossover gates. 

\section{Four tiles: the \nn tile set}\label{sec:four tiles}
\begin{figure}
    \includegraphics[width=\textwidth]{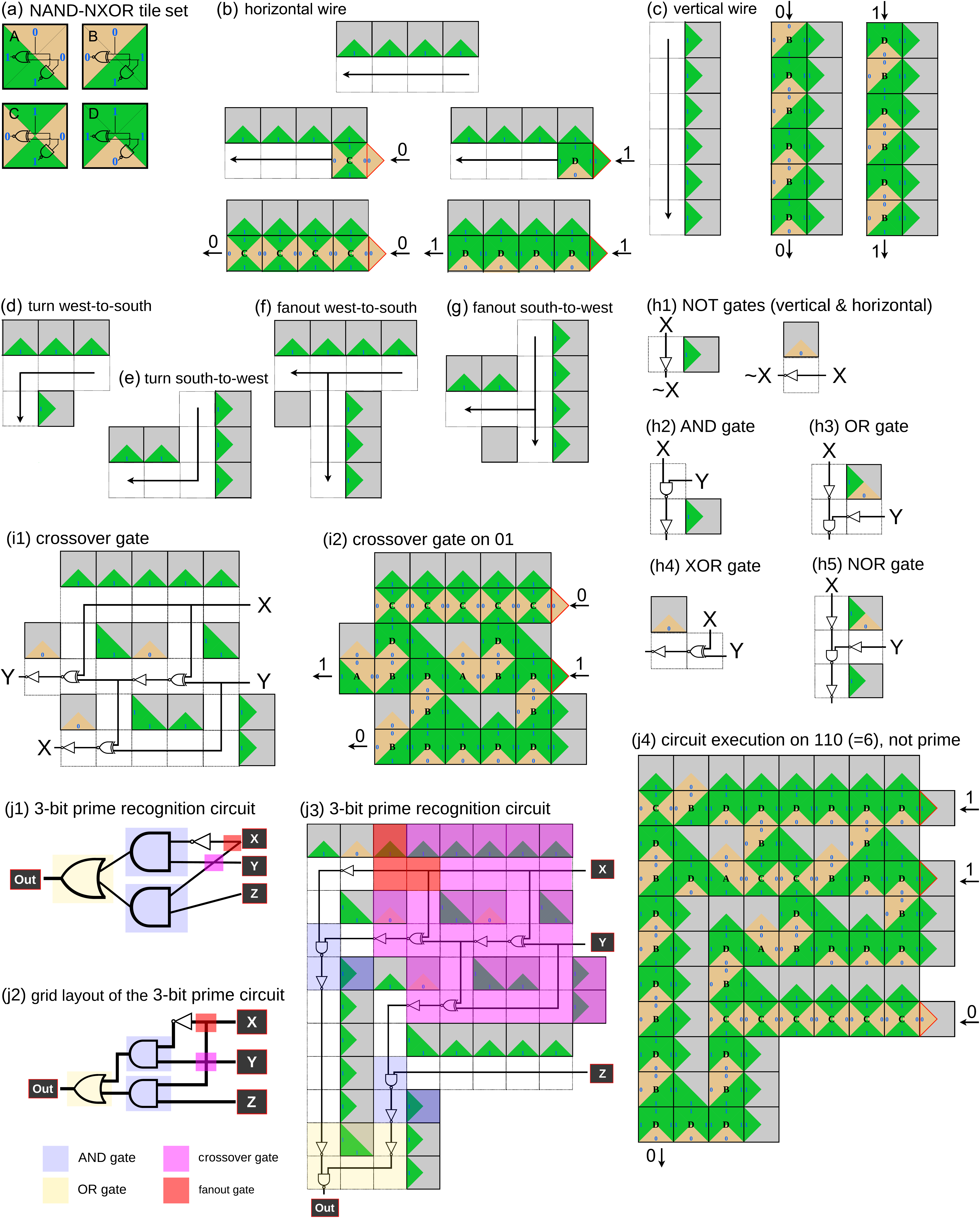}
    \vspace{-4ex}\caption{Circuit-simulating gadgets for the \nn  tile set.
            In all parts of the construction growth proceeds to the west and south (and never north nor east).
    (a) \nn tile set. 
      Seed structures to implement (b) horizontal west-growing and (c) vertical south-growing wires. Examples of communicating of 0 and 1 are shown for each. 
      Vertical wires are of even length; in cases where odd length is required we use a horizontal NOT gates during a turn from south-to-west (see proof of Theorem~\ref{thm:NANDNXOR}).  
        (d) Turn west-to-south, (e) turn south-to-west, 
        (f) fanout west-to-south, and 
        (g) fanout south-to-west.
        The two isolated unit-size squares in (f,g) are there only to prevent unintended cooperative growth after a fanout.
        (h1--5) Various logic gates (full set in Figure~\ref{fig:all_gates4}). 
       (i1) Crossover gate with an example in (i2) with design based on the 3 XOR gates construction given in \cite{cantu2021covert}. 
        (j1)~An example Boolean circuit that decides whether a 3-bit number is prime. 
        (j2) Circuit converted to a grid layout and (j3) implemented using \nn tile gadgets. 
        The implementation in (j3) is somewhat optimised for space efficiency. 
        (j4) The terminal assembly (execution) for the circuit example on non-prime input $6_{10} = 110_2$. 
    }\label{fig:gadgets4}
\end{figure}

The \nn tile set is depicted in Figure~\ref{fig:gadgets4}(a). 
One of the ideas underlying all of the constructions in this paper can be understood  by the way horizontal wires are implemented with the \nn tile set, Figure~\ref{fig:gadgets4}(b). 
A specific $n \times 1$ polyomino seed advertises ``1'' glues along its south side, which facilitates propagation to the west of any bit presented as a glue coming from the east, following the assembly rules prescribed by the tile set. 
As described in the proof of Theorem~\ref{thm:NANDNXOR},
the implementation of Boolean circuits using the \nn tile set is based on canonical constructions of logic gates exploiting NAND, NOT and NXOR functions as primitive building blocks, Figure~\ref{fig:gadgets4}(h1-h5).

\thmNANDNXOR* 
\begin{proof}
    A circuit is simulated by appropriately placing gadgets together to form a maze. 
     
    {\bf Tiles simulating wires and gates.}
    We will show that the gadgets in Figure~\ref{fig:gadgets4} are building blocks (for a maze) that advertise glues designed to force directed growth when given appropriate bit-encoding glue input(s). 
    
    Figure~\ref{fig:gadgets4}(b,c) details how the \nn tile set simulates horizontal and vertical wires. 
    Vertical tile-wires have a parity constraint: in a vertical wire carrying the bit $x\in\{0,1\}$, 
    every second tile correctly advertises $x$ to the south, and every other tile advertises its negation $\simnot x$. 
    If the circuit's layout requires a turn from south-to-west, from an {\em odd length} vertical wire (advertises $\simnot x$) 
    then a single horizontal negation gadget (Figure~\ref{fig:gadgets4}(h1, right)) is placed at the bottom of the wire to change the signal to $x$ (correct the ``error''). 
    With that correction, vertical and horizontal wire segments can be used to send a signal  from the origin to any location in the south-west quadrant of $\Zset^2$.

Figure~\ref{fig:gadgets4}(d--g,h1--h5,i1) shows  
    two turns (south-to-west and west-to-south) and  two kinds of fanout-2 gates, 
    as well as a number of compute gates and a crossover gate.
    In addition NAND, and NXOR, gates are shown in Figure~\ref{fig:gadgets4}(a): present inputs {\sf x}, {\sf y} at North and East, 
    and read NXOR({\sf x}, {\sf y}) on West  and/or  NAND({\sf x}, {\sf y}) on South.  
    (For completeness, Figure~\ref{fig:all_gates4} gives direct simulations of all 16 possible gates with 1 or 2 inputs and one output.) 
    No gate is larger than NOR (see Figure~\ref{fig:gadgets4}(h5) and Figure~\ref{fig:all_gates4}), which uses $\computeGateFourTilesTileLevel$ tiles.
    The crossover gate is simulated using $\crossoverGateFourTilesTileLevel$ tiles (intuitively, it uses a well-known idea of implementing crossover with three XOR gates and three fanout gates).
  This gives the size bounds on tiles per gate and crossovers in the theorem statement.
  
  We claim that each gadget in Figures~\ref{fig:gadgets4}(b--g,h1--h5,i1) and Figure~\ref{fig:all_gates4} is directed, meaning that after input glue(s) are given to the gadget, then for each unit-sized outlined/dotted empty square region in the gadget there is exactly one tile type that can be placed. 
  This can be seen by noting that (i) for all gadgets, and all inputs to a gadget, tiles attach using their North and East sides only, and by (ii) the fact that the \nn tile set is deterministic on North and East sides.

   {\bf Laying the circuit out on a grid.}
   For the  Boolean circuit $c$, let $\hat c$ be its planarisation as defined in Section~\ref{sec:defs:circuits}; a planarisation always exists---just draw the circuit on the plane replacing each of the $s'\in \Nset$ wire crossings with a crossover gate (various planarisations may be used to optimise $s'$, or other circuit parameters).

    Second, we layer $c$: meaning that we organise gates (including crossover gates) of $c$ into consecutive layers with layer 0 containing all input and constant gates, and so that layer~$i$ contains gates that take their inputs from the outputs of gates in layers $< i$. The number of layers is equal to the depth $d$ of $c$, with the output gate being the sole gate in layer $d-1$.
    More precisely, layer $i$ is located at x-coordinate $- i$ (our convention is to draw circuits from right to left).

    Third, we increase the height between gates, and width between layers, so that there is enough room to draw all wires so that they are composed of horizontal and vertical segments only (where information flows to the west and to the south, respectively), that meet at right angles (thus wires have south-to-west and west-to-south turns, only).  
   We call the resulting circuit a grid-layout circuit, and an example   given in Figure~\ref{fig:gadgets4}(j2).    
    Using the gadgets described above, the maze/seed structure traces out the wires and gate locations according to the south-west grid-layout circuit, leaving enough room so that gates and wires to not intersect. 
    
    {\bf Computation.}
  For any circuit $c$ we have described (at a high level) how to lay out a maze $M'$,  in the notation of Section~\ref{sec:def:mawatam}. We next need to encode circuit inputs, as follows. 
    Since input gates are instances of gates, we assume that in $M'$ there are $n$ tile positions that are empty and positioned adjacent to wires (so that their bit values will feed into a layer of gates via horizontal wire gadgets).
   Let $n$ be the number of inputs to $c$ and let $x = x_0 x_1 \cdots x_{n-1} \in \{0,1\}^n$ denote an input to~$c$. 
   To the maze $M'$ we add $n$ more tiles so that the $n$ input glue positions of the maze are of respective types $x_0 x_1 \cdots x_{n-1}$, to give an maze $M_x$ that encodes $x$  (the example in Figure~\ref{fig:gadgets4}(j4) has 3 encoded input bits).

    Assembly proceeds, starting at each of the $n$  input glues in parallel (and at any positions that encode 0/1 constant bits), according to the \mawatam definition (Section~\ref{sec:def:mawatam}). 
    Throughout the entire self-assembly process, at each position there is exactly one tile type that can be placed (this is because it is true for individual gadgets as already argued). 
    Also, the self-assembly process terminates, for the simple reason that no tile can attach outside of the bounding box of the maze $M_x$.  
    Thus one terminal assembly is eventually produced, that by its definition, encodes an execution of the circuit $c$ with the output bit presented at the glue position that represents the simulated circuit output gate (labeled ``out'' in the example in Figure~\ref{fig:gadgets4}(j3)). 
\end{proof}

\begin{example}\label{ex:circuit4}\normalfont
Figure~\ref{fig:gadgets4}(j1-j4) illustrates the general construction described in Theorem~\ref{thm:NANDNXOR} in the context of a circuit that recognises prime numbers on 3 bits, i.e. the circuit will output $1$ if and only if $xyz \in \{010, 011, 101, 111\}$ which are the binary encodings of numbers $\{2,3,5,7\}$. The circuit implements the formula: (((NOT $x$) AND $y$) OR ($x$ AND $z$)) and uses one crossover as well as one fanout gate, Figure~\ref{fig:gadgets4}(j1). To facilitate the final \mawatam implementation, the circuit is laid out on a grid using only south-to-west and west-to-south turns, Figure~\ref{fig:gadgets4}(j2). Then, the circuit is implemented with tiles, Figure~\ref{fig:gadgets4}(j2), using the gadgets of Figure~\ref{fig:gadgets4} and finally, the circuit executes on input $110_2 = 6$ and outputs $0$ as $6$ is not prime, Figure~\ref{fig:gadgets4}(j4). Note two details: 
(1)~The implementation of the crossover gate,  Figure~\ref{fig:gadgets4}(i1), contains three embedded XOR gadgets and three embedded fanout gadgets---using tiles to implement a known construction to simulate crossover with XORs. 
(2)~The way the OR gate is implemented in Figure~\ref{fig:gadgets4}(j3) (yellow overlay) is slightly different than Figure~\ref{fig:gadgets4}(h3) as the negation of the east-coming input is performed vertically instead of horizontally; this is an optimisation that exploits the difference in length parity of the two vertical wires coming in to the gate. 
\end{example}

\begin{figure}[h]
    \includegraphics[width=\textwidth]{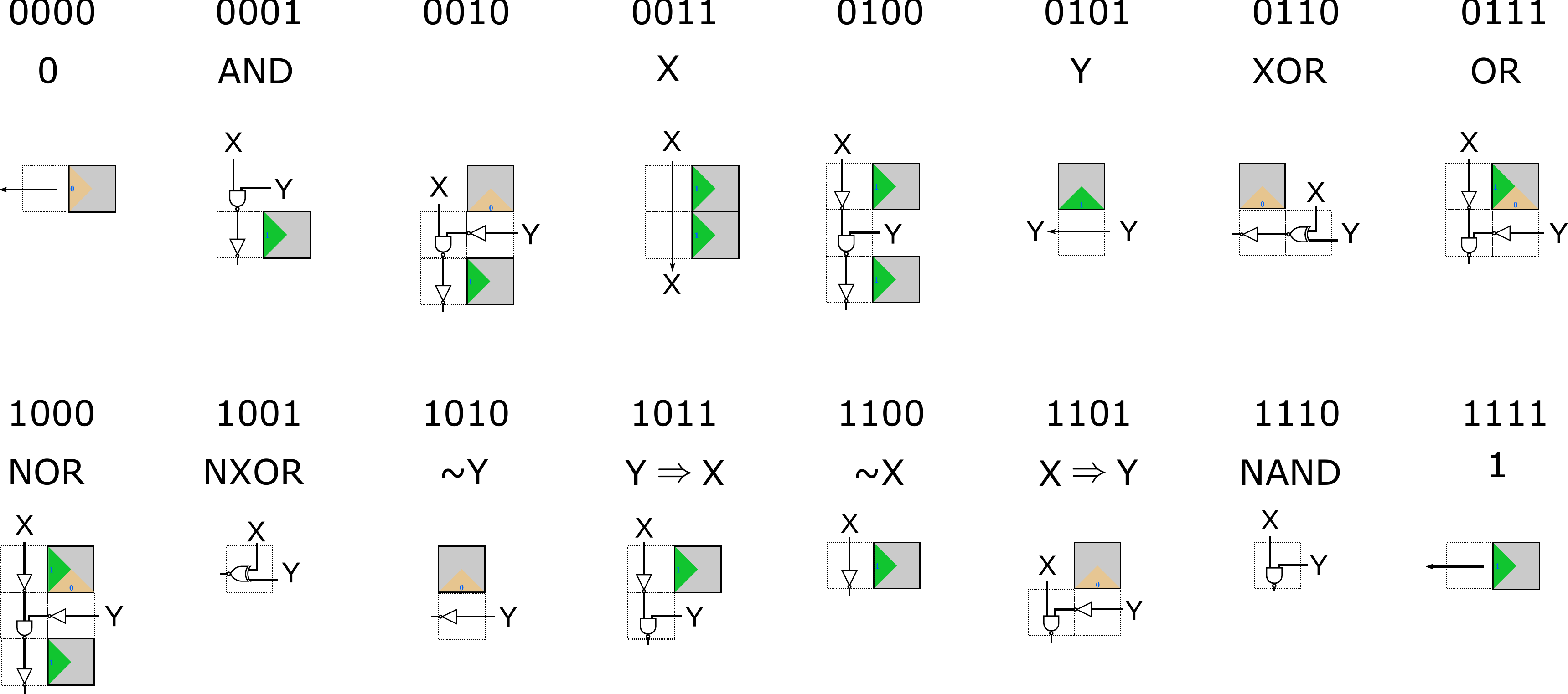}
    \caption{Implementation of all 2-input 1-output Boolean gates using gadgets over the \nn tile set in the \mawatam. 
    The gadgets are ordered with respect to their truth table, which refers to the 4-bit output of the 4 respective inputs 00, 01, 10, 11; i.e.~the canonical truth-table definition of a 2-in 1-out gate (we use the same notation for gates with one (NOT, identity) or zero inputs (constants)). 
    For instance, the truth table $1101$ encodes gate $g$ such that $g(00) = 1$, $g(01) = 1$, $g(10) = 0$ and $g(11) = 1$. The common English name of the gate is also given when there is one. The constant gadgets (0000 and 1111) are used to simulate constant gates (0/1) and circuit input gates $x_i \in \{0,1\}$, and require the presence of an additional glue (not shown) to trigger growth, e.g.~by being placed next to a wire gadget as shown in Figure~\ref{fig:gadgets4}(j4).}\label{fig:all_gates4}
\end{figure}


\section{Six tiles: the Collatz tileset}\label{sec:six tiles}

In this section, we illustrate efficient Boolean circuit simulation in the \datam with 
the Collatz tile set which consists of of 6 tile types and 3 glues and is shown in Figure~\ref{fig:tilesets}(b).

On the one hand, the \nn tile set was explicitly designed to compute, via the placement of a single tile, the universal NAND function. From there it was augmented (with bits on the west sides) that facilitate simulation of circuit wiring, and efficient simulation (few tiles) of non-NAND gates.   
On the other hand, the Collatz tile set came about from studies on the Collatz problem. 
Specifically, glue patterns in some tiled regions  
(e.g.~rectangles) relate to notoriously hard mathematical problems such as the Collatz conjecture \cite{Collatz2} or an open problem of  Erd\"{o}s'~\cite{ErdosPowers2,LagariasPowers2}: 
Is it the case that for all $n>8$ there is at least one $2$ in the ternary representation of $2^n$?
For more details see Appendix~\ref{app:iwantmoreCollatz}.
We noticed that this pattern complexity could be leveraged, with the aid of computer search\footnote{Computer search was performed through the \mawatam simulator: \url{https://github.com/tcosmo/mawatam}},
 to build gadgets for computation in the \mawatam (Figure~\ref{fig:gadgets6}).

\thmCollatz* 
\begin{figure}
\centering
  \includegraphics[width=0.95\textwidth]{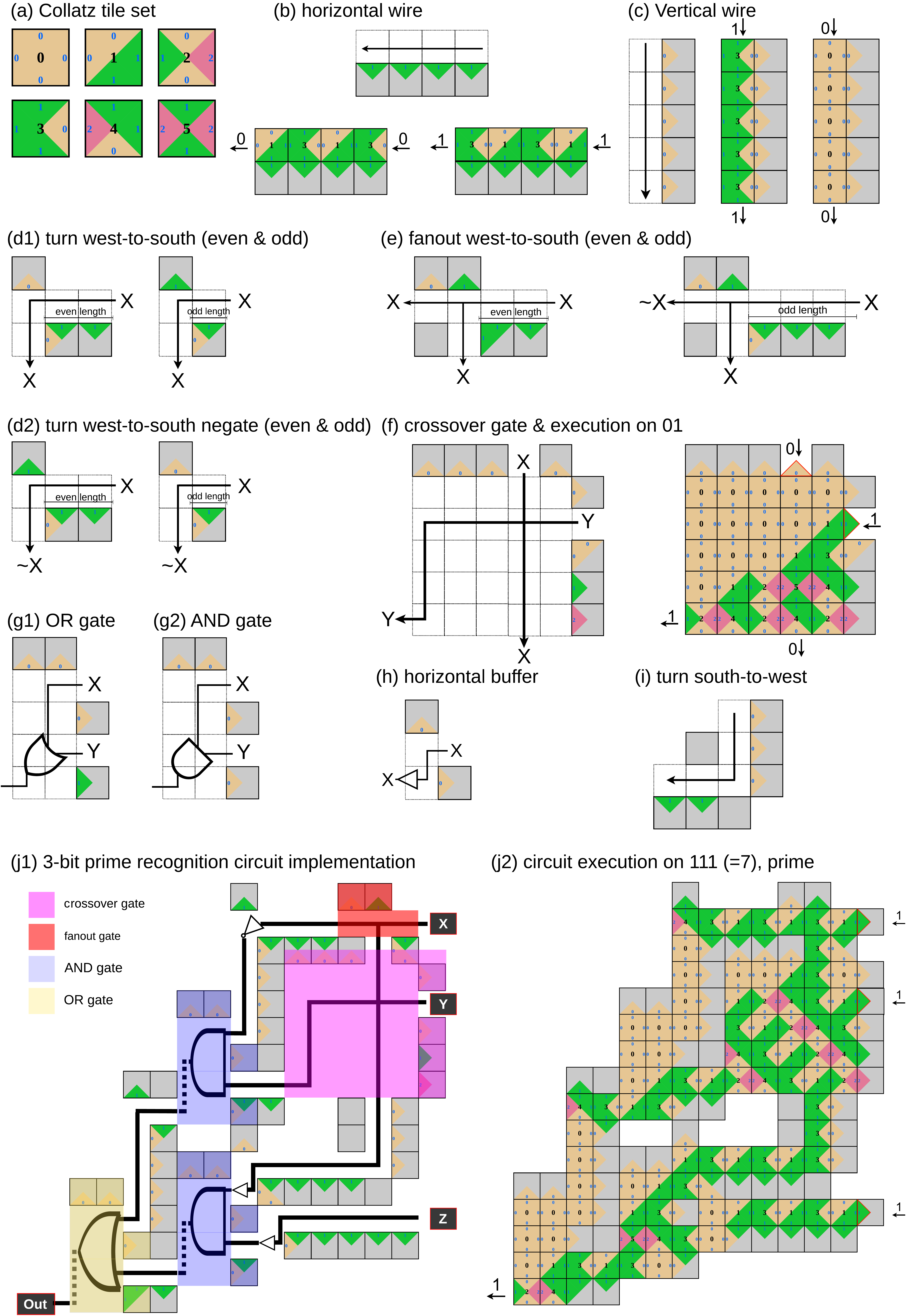}
  \caption{Circuit-simulating gadgets for the Collatz tile set. Growth proceeds to the west and south exclusively. (a) the Collatz tile set. Seed structures to implement (b) horizontal west-growing and (c) vertical south-growing wires. Horizontal wires are of even length. When turning to the south the appropriate turn can be used to transmit the signal (d1) or its negation (d2). (e) Fanout gadgets depending on the parity of the incoming horizontal wire, if the length is odd, the gadget also negates the west-going signal. (f) The smallest crossover gate found by computer search. (g) Common Boolean gates, also found by computer search. (h) The buffer gadget is used to change the parity of an horizontal wire. (i) Turn south-to-west. (j1) Collatz-tileset implementation of the 3-bit prime recognition circuit and (j2) execution of the circuit on $7_{10} = 111_2$ which is prime.}\label{fig:gadgets6}
\end{figure}
\begin{proof}
    {\bf Tiles simulating wires and gates.} 
    We will show that the gadgets in Figure~\ref{fig:gadgets6} can be used to build mazes that simulate arbitrary Boolean circuits and 
    that the growth triggered by the placement of input tiles is directed, 
    which in turn implies that the correct bit is output by the simulation of $c$ on some binary input word $x$.

    Figure~\ref{fig:gadgets6}(b,c) details how the Collatz tile set simulates horizontal and vertical wires. 
    Horizontal tile-wires have a parity constraint: in a horizontal wire carrying the bit $x\in\{0,1\}$, 
    every second tile correctly advertises $x$ to the west, and every other tile advertises its negation~$\simnot x$. 
    To handle this, there are two west-to-south turns, one for turning from even length, and one for turning from odd length, horizontal wires Figure~\ref{fig:gadgets6}(d1). Only west-to-south fanout is used in the constructions with this tileset, Figure~\ref{fig:gadgets6}(e). This fanout gate comes in two variants whether it is applied at an even or an odd horizontal wire position. If the gadget is applied at an odd wire position, it has the particularity of negating the output west-going signal. 

    Negating a signal (either to correct a horizontal parity effect, or to simulate a NOT gate) can be achieved in several ways. If the signal ever turns south, this can easily be done thanks to Figure~\ref{fig:gadgets6}(d2) which implements both a turn and a negation at the same time.
    If the signal never turns south, the programmer can use an odd-length horizontal wire which implements a negation. If using an odd-length horizontal wire is not possible given the constraints on circuit layout, the programmer can use the horizontal buffer gadget Figure~\ref{fig:gadgets6}(h) which has the effect of copying the incoming signal to the next immediate column to the west which inverts the parity constraint of the horizontal wire and allows it to reproduce the behavior of an odd-length horizontal wire. This method is used in Figure~\ref{fig:gadgets6}(j1), for instance on the horizontal wire which connects the input Z to its target AND gate. 

    Glue labelled polyominos, or seed structures, for south-to-west turns is shown in Figure~\ref{fig:gadgets6}(i). Notably, a growth stopper ($1\times 1$ polyomino, with four null glues) is used to prevent spurious growth that would happen in the north-west direction  otherwise.

    A crossover gadget seed structure is given in Figure~\ref{fig:gadgets6}(f), it was the smallest found by computer search and it costs 33 tiles. The gate preserves the horizontal alignment of the incoming northern bit: it exits at the south of the gate at the same $x$-position that it entered. However, the incoming eastern bit is deviated three units to the south.

    Seed (polyomino) structures that simulate Boolean (compute) gates
     are rectangular and were found by computer search using the input convention that signals come from the east and, if there are two of them the inputs should be one vertical block apart\footnote{Using computer search, we were able to find rectangular seed structures of Boolean gates corresponding to all the input conventions that we experimented with. This leads us to believe that the ability of the Collatz tileset to simulate Boolean gates is not tied to a particular input convention.}. Figure~\ref{fig:gadgets6}(g1,g2) gives the seed structure of an OR gate and an AND gate. For completeness, Figure~\ref{fig:all_gates6} gives the implementation of all Boolean gates, the biggest of them is NOR with a cost of 14 tiles. This gives the tiles bounds per gate and crossover in the theorem statement. Remarkably, seed structures for AND, OR, NAND, NOR are very similar in the sense that they differ by at most 2 glues.

We claim that each gadget in Figure~\ref{fig:gadgets6} and Figure~\ref{fig:all_gates6} is directed, meaning that after input glues are supplied to the gadget then for each dotted region in the gadget there is exactly one tile type that can be placed. This can be seen by noting that (i) all gadgets use either North and East sides to attach or South and East sides to attach (South and East attachments are only used for horizontal wires and turn south-to-west gadgets, Figure~\ref{fig:gadgets6}(b,i)), 
(ii) North and East attachments cannot compete with South and East attachments because all signals travel in the south-west direction and South and East constraints are never given directly by the seed but occur after tiles attach, and 
(iii) the Collatz tile set is deterministic on North and East sides and South and East sides.

{\bf Laying the circuit out on a grid.}
We use the same circuit layout technique given in the the proof of Theorem~\ref{thm:NANDNXOR}. 

{\bf Computation.}
Similarly to the proof of Theorem~\ref{thm:NANDNXOR}, throughout the entire assembly process, because of the directedness of all the gadgets that we use, at each position there is exactly one tile type that can be placed. Thus one final assembly is produced, that encodes an execution of the circuit, and in particular outputs the same bit as the $n$-bit circuit $c$ on any input word $x \in \{ 0,1\}^n$.
\end{proof}

\begin{example}\normalfont
The 3-bit prime recognition circuit in Figure~\ref{fig:gadgets4}(j1,j2) is implemented using the Collatz tile set in  Figure~\ref{fig:gadgets6}(j1,j2).
\end{example}

\begin{figure}[h]
    \includegraphics[width=\textwidth]{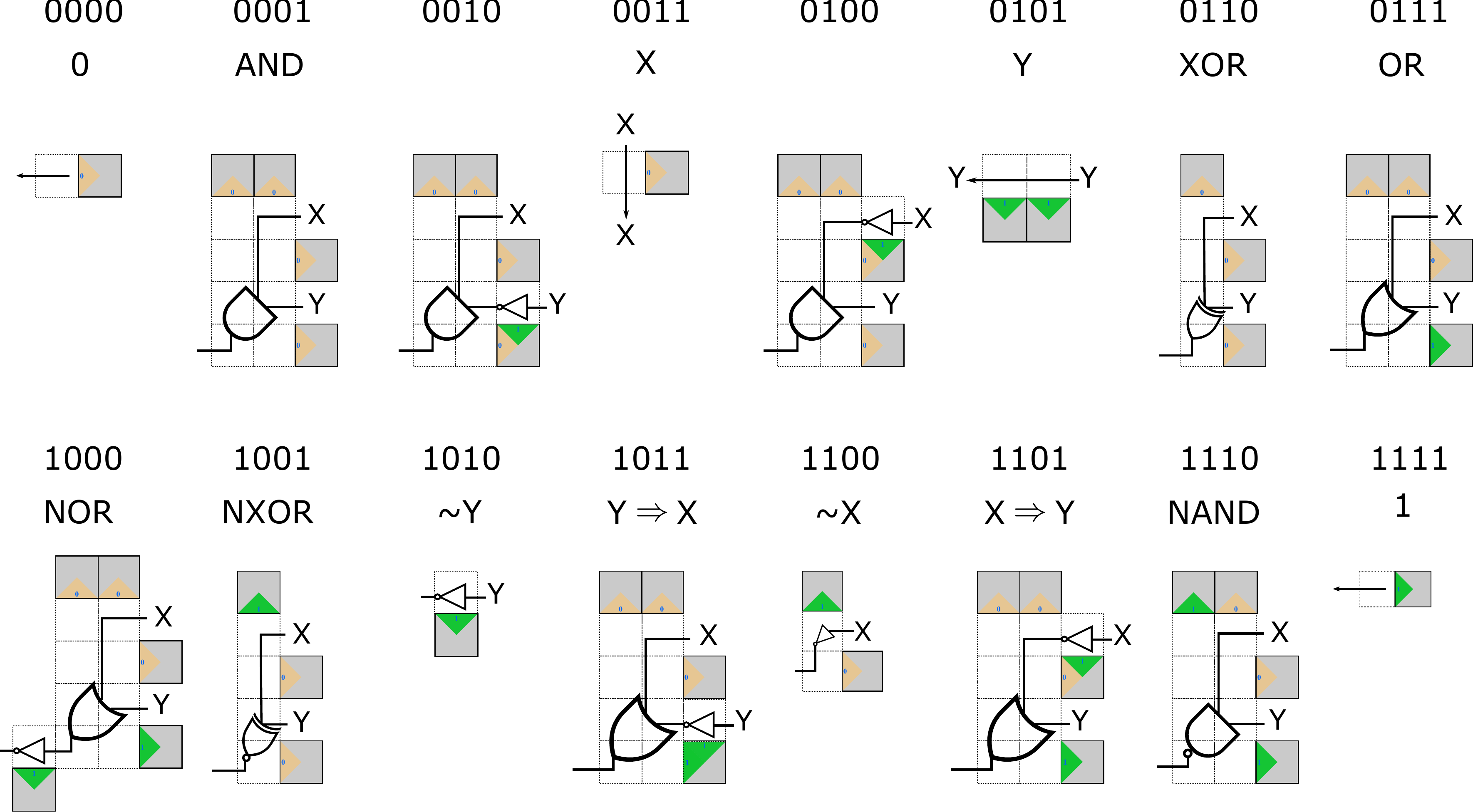}
    \caption{Implementation of all 2-input 1-output Boolean gates using gadgets over the Collatz tile set in the \mawatam. 
    The gadgets are ordered with respect to their truth table which refers to the 4-bit output of the 4 respective inputs 00, 01, 10, 11; i.e.~the canonical truth-table definition of a 2-in 1-out gate (we use the same notation for gates with one (NOT, identity) or zero inputs (constants)).  
    For instance, the truth table $1101$ encodes gate $g$ such that $g(00) = 1$, $g(01) = 1$, $g(10) = 0$ and $g(11) = 1$. 
    The common English name of the gate is also given when there is one.
    The constant gadgets (0000 and 1111) are used to simulate constant gates (0/1) and circuit input gates $x_i \in \{0,1\}$, and require the presence of an additional glue (not shown) to trigger growth, e.g.~by being placed next to a wire gadget as shown in Figure~\ref{fig:gadgets6}(j2).}\label{fig:all_gates6}
\end{figure}

\section*{Acknowledgements}
We thank Trent Rogers, Niall Murphy, Pierre Marcus and Nicolas Schabanel for useful discussions on the \mawatamfull.  

\bibliography{main}

\appendix

\section{Origins of the Collatz tile set}\label{app:iwantmoreCollatz}

\begin{figure}\centering
\includegraphics[width=0.9\textwidth]{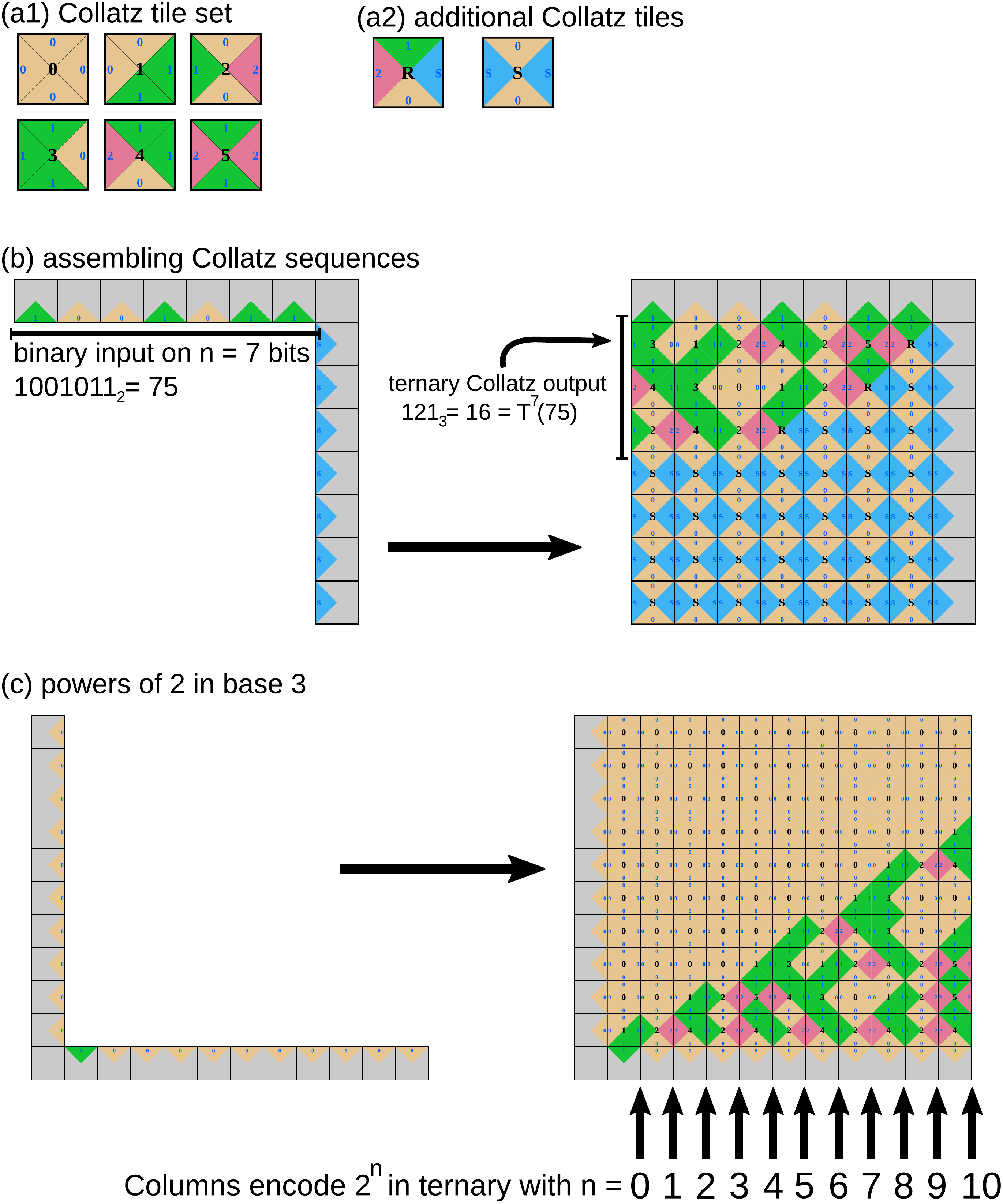}
\caption{The Collatz tile set and its relationship with the Collatz problem and Erd\"{o}s' conjecture. (a1) The Collatz tile set. (a2) Two additional tiles which allow to assemble Collatz trajectories from simple north-east L-shaped seeds. (b) Assembling the first $7$ steps of the Collatz trajectory of $75 = 1001011_2$. The output, $T^7(75)$ can be read in base 3 on the west-most glues of the final assembly (ignoring ``S'' glues). Here, $T^7(1001011_2) = 121_3$ meaning, in base 10, $T^7(75) = 16$. (c) Constructing successive powers of $2$ in base 3: the column marked with arrow number $n$ encodes $2^n$ in base 3.  Erd\"{o}s' conjecture states that, for $n>8$, $2^n$ contains at least one $2$ in base 3 \cite{ErdosPowers2}.}\label{fig:moreCollatz}
\end{figure}

The Collatz problem is a notoriously hard open problem which lies at the intersection
of mathematics and computer science \cite{Lagarias1986,wirsching1998the,survey1}. Formulated in the 30s, the Collatz problem is dauntingly simple to express: consider the Collatz map $T: \mathbb{N} \to \mathbb{N}$ defined by $T(x) = x/2$ if $x$ is even and $T(x) = (3x+1)/2$ if $x$ is odd. The Collatz conjecture states that iterating~$T$, starting from any $n\in \{ 1,2,3,\ldots \}$,  eventually yields $1$.

The Collatz tile set  consists of six tiles, named ``0'' to ``5'', which are depicted in Figure~\ref{fig:moreCollatz}(a1). Vertical glues (north and south) are binary digits (0 and 1) while horizontal glues (east and west) are ternary digits (0, 1 and 2). Each tile is uniquely identified by its north-east corner (pair of glues) or its south-east corner or its south-west corner. Tile names are linked to the tile's glues by the following arithmetical relation: for the tile named~$x$ (with $0 \leq x < 6$) we have:
\begin{equation}
x = 3 N + E = 2 W+S\label{eq:magic}
\end{equation}
where $N, E, W$ and $S$ respectively denote  the values of the North, East, West and South glues. 
 This tile set, among all tile sets which use binary (0, 1) vertical glues and ternary (0, 1, 2) horizontal glues, is the largest tile set for which \eqref{eq:magic} holds, by the following argument. Indeed, \eqref{eq:magic} corresponds to the Euclidean division of $x$ by $3$ and by $2$, meaning that, for a given pair $(N,E)\in\{0,1\}\times\{0,1,2\}$ there is a unique corresponding pair $(S,W)\in\{0,1\}\times\{0,1,2\}$. Since there are 6 different $(N,E)$ pairs we deduce that there are exactly 6 different tiles with binary vertical glues and ternary horizontal glues that satisfy \eqref{eq:magic}. 
 Moreover, analogous tile sets can be generated for any relatively prime $p,q$ (not only $p=2, q=3$), further suggesting its naturalness as an object of study.

{\bf Computing Collatz trajectories with the Collatz tile set plus two more tiles.}
Together with the two extra tiles depicted in Figure~\ref{fig:moreCollatz}(a2), the Collatz tile set is able to assemble Collatz trajectories starting from a straightforward north-east L-shaped seed as depicted in Figure~\ref{fig:moreCollatz}(b). Input to the Collatz iterations are given in binary on the north-most glues (with LSB to the east). 
If the binary input $x$ is of size $n$, we place $n$ ``S'' on the vertical portion of the seed (to the east). The assembly process is directed (i.e.~deterministic in these sense of which tile type is placed where) and, after it is finished, the $n^\text{th}$ Collatz iterate of the binary input $x$, that is $T^{n}(x)$, will be written in ternary along the west-most glues of the assembly (ignoring ``S'' glues). In the example of Figure~\ref{fig:moreCollatz}(b) we read $T^7(1001011_2) = 121_3$ meaning that, in base 10, $T^7(75) = 16$. This phenomenon can be proven using the results of~\cite{Collatz2}, more precisely by identifying the Collatz tile set to the local rule of the CA-like system introduced in \cite{Collatz2} and the two additional tiles to the non-local rule in Figure~(1a)[right] of~\cite{Collatz2}). In practice, the two additional tiles are merely responsible for deleting trailing 0s in binary (which corresponds to the $/2$ part of the Collatz map) while the Collatz tile set does the heavier work of computing $3x+1$ in binary while maintaining a correspondence between base $2$ and base $3$ encodings.

{\bf Predicting patterns produced by the Collatz tile set.}
The computational complexity of predicting what tile will be placed at a given position of the square area defined by the north-east L-shaped seed in Figure~\ref{fig:moreCollatz}(b) is an open question~\cite{Collatz2}. However, if we restrict ourselves to using the Collatz tile set alone, without the two additional tiles, the prediction problem is in NL for each of the three L-shaped seeds: north-east, south-east and south-west (for any length $n \in \Nset$). That is because the relationship between pairs of tile sides, expressed in (1), can be generalised to any rectangular assembly to give a simple arithmetical formula computable in nondeterministic logspace~\cite{Collatz2,hesse2002uniform}  ($3^h N + E = 2^w W+S$, where now $N,E,W,S$ denote binary/ternary numbers written in glue sequences along the respective North, East, West and South sides of a $w \times h$ rectangle). This fact means that, assuming a widely-believed conjecture in complexity theory (namely, NL $\neq$ P), it is not possible to simulate arbitrary, polynomial size, Boolean circuits using the 6-tile Collatz tile set with those simple L-shaped connected seeds (within area polynomial in circuit size).

Although rectangular assemblies made with the Collatz tile set are simple to predict, they also relate to hard open questions in number theory. Notably to the following conjecture by Erd\"{o}s \cite{ErdosPowers2}: For all $n > 8$, there is at least one digit $2$ in the ternary representation of~$2^n$.
Indeed, starting from the straightforward south-west L-shaped seed of Figure~\ref{fig:moreCollatz}(c), consisting of $m$ vertical $0$s and a horizontal $1$ followed by $m-1$ horizontal $0$s, an induction proves that consecutive columns of the assembly will encode successive powers of two in ternary. For instance, on the first 4 columns pointed by an arrow in Figure~\ref{fig:moreCollatz}(c) we can successively read: ``1'', ``2'', ``11'', ``22'' which are the ternary encodings of 1, 2, 4 and 8,  the four first powers of 2. Erd\"{o}s conjecture then becomes: any column to the east of the $10^\text{th}$ column of the assembly (counting from the easternmost input column of vertical $0$s), will contain a glue ``2'' (in red). This problem can be seen as a potentially simpler conjecture than the Collatz conjecture~\cite{LagariasPowers2}.

\end{document}